\long\def\comment#1{{}}
\newtheorem{thm}{Theorem}[section]
\newtheorem{lemma}[thm]{Lemma}
\newtheorem{prop}[thm]{Proposition}
\newtheorem{cor}[thm]{Corollary}
\theoremstyle{definition}
\newtheorem{df}[thm]{Definition}
\newtheorem{ex}[thm]{Example}
\begin{document}

\title{{\bf The key properties of inconsistency for a triad in pairwise comparisons matrices}}

\author{W.W. Koczkodaj
\thanks{
Computer Science, Laurentian University,\;\;
Sudbury, Ontario P3E 2C6, Canada 
wkoczkodaj@cs.laurentian.ca 
} \\
\and
J. Szybowski
\thanks{
AGH University of Science and Technology,\; 
Faculty of Applied Mathematics,\;
al. Mickiewicza 30, 30-059 Krakow, Poland, szybowsk@agh.edu.pl}
\thanks{Research supported by the 
Polish Ministry of Science and Higher Education}
} 

\maketitle

\begin{abstract}
Processing information, acquired by subjective assessments, involves inconsistency analysis in most (if not all) applications of which some are of considerable importance at a national level (see, \cite{KKL2014})
A triad inconsistency axiomatization in pairwise comparisons was informally proposed in \cite{KS2014a}. 
This study, rectifies it by the use of the distance and theoretical proofs. Three key properties of the indicator are presented in this study and illustrated by several examples. \\

Keywords: inconsistency, distance, knowledge management, pairwise comparison

\end{abstract}

\section{Introduction-preliminaries}

The importance of pairwise comparisons and its practical usefulness has been recently evidenced by P. Faliszewski, L.A and E Hemaspaandras in one of the flagship ACM publication \cite{FHH2010}. The triad consistency 
was introduced in \cite{KS2015a}, extended and analyzed in \cite{DK1994,HK1996M-C,HK1996converg}. The presented concept of triads is similar to \cite{AM2005} and \cite{CRR2009}. However, this study is self-contained and does not require any listed reference to follow.

From the mathematical point of view, pairwise comparisons (PC) matrix $M$ represents strictly positive ratios of entities 
$E_1, E_2, \ldots, E_n$ for $i,j \in \{1,2,...n\}$

Assuming that  $m_{ik}, m_{kj}, m_{ij}$ represent ratios of entities , the consistency condition simply states that: $m_{ij}= m_{ik} m_{kj}$. If entities are somehow measurable, we have: 
 $$\frac{E_i}{E_j}=\frac{E_i}{E_k}\cdot\frac{E_k}{E_j}$$

\noindent and it is consistent with the common sense. Regretfully, most entities are highly subjective (e.g., public safety or public satisfaction) hence not easy to measure but we can relate them to each other. It means that our input (ratios) is less restrictive and applicable to situations where a common ``yardstick'' simply does not exist.

In the case of $3 \times  3$ PC matrix, three PC matrix elements above the main diagonal represent the entire PC matrix if it is reciprocal matrix.
In particular, two elements above the main diagonal are sufficient for the entire PC matrix reconstruction from the consistency condition if the PC matrix is consistent.
since $m_{1,3}=m_{1,2}\cdot m_{2,3}=m_{2,3}\cdot m_{1,2}$.
and this simple fact is now used in our axiomatization.

Let us revisit the axioms proposed in \cite{KS2014a} for the inconsistency indicator $ii$
(analyzed independently by Bozoki and the late Rapcsak in \cite{BR2008}) for a single triad $(x,y,z)$:

\begin{enumerate}
    \item $ii=0$ for $y=x*z$,
    \item $ii \in [0,1)$,
    \item for a consistent triad $ii(x,y,z)=0$ with $xz=y$, increasing or decreasing $x,y,z$ results in increasing $ii(x,y,z)$.
\end{enumerate}

The third axiom was vaguely worded and it has been found that the clarification is needed.
The proper mathematical analysis for a single triad is provided in this study. For it, a special metric $d$ is introduced. 
%
The mathematical properties of inconsistency indicators 
will be discussed in the next section. 

\section{The main results}


\begin{df}
A function $td:\ {\mathbb R}_+^3  \rightarrow [0,+\infty)$ is called {\em a triad 
deviation}, if there exists a metric 
$d:\ {\mathbb R}_+^2 \rightarrow [0,+\infty)$ such that $\forall x, y, z$ it holds 
\begin{equation}
 td(x,y,z)=d(xz,y). \label{td}
\end{equation}
\end{df}

\begin{df}
We say that a triad deviation $td$ satisfying (\ref{td}) is {\em induced by a metric $d$}.
\end{df}

\begin{flushleft}
Proposition 2.3 follows straight from the definition of a triad deviation.
\end{flushleft}

\begin{prop}
For all positive real numbers $a,b,c,d$ and $e$, a triad deviation $td$ satisfies the conditions
\begin{eqnarray}
td(a,b,c)=0 &\Leftrightarrow& ac=b \label{zero} \\
td(a,b,c)&=&td(b,ac,1) \label{comm}\\ 
td(a,de,c) &\leq& td(a,b,c)+td(d,b,e) \label{tri}
\end{eqnarray}
\end{prop}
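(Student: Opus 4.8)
The plan is to unfold the defining relation $td(x,y,z)=d(xz,y)$ from (\ref{td}) in each of the three claimed statements and then reduce them, one at a time, to the three defining axioms of the metric $d$ on $\mathbb{R}_+$: the identity of indiscernibles $d(u,v)=0\iff u=v$, symmetry $d(u,v)=d(v,u)$, and the triangle inequality $d(u,w)\le d(u,v)+d(v,w)$. Since $td$ is nothing but $d$ composed with the substitution $(x,y,z)\mapsto(xz,y)$, the whole proposition should fall out as a direct transcription of these axioms, with the only real work being to track which arguments of $td$ play the role of the two arguments of $d$.

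First I would dispatch (\ref{zero}): substituting the definition gives $td(a,b,c)=d(ac,b)$, and because $d$ is a metric this vanishes precisely when its two arguments coincide, i.e. when $ac=b$, which is the asserted equivalence. For (\ref{comm}) I would evaluate both sides separately. The left side is $td(a,b,c)=d(ac,b)$; on the right side the third slot equals $1$, so $td(b,ac,1)=d(b\cdot 1,ac)=d(b,ac)$, and symmetry of $d$ identifies $d(ac,b)$ with $d(b,ac)$. The only point demanding care here is the bookkeeping: the constant $1$ in the last argument collapses the product $b\cdot 1$ to $b$, so that symmetry alone closes the gap.

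For the inequality (\ref{tri}) I would again expand all three terms through (\ref{td}): $td(a,de,c)=d(ac,de)$, $td(a,b,c)=d(ac,b)$, and $td(d,b,e)=d(de,b)$. The claim thus reduces to $d(ac,de)\le d(ac,b)+d(de,b)$. Invoking the triangle inequality of $d$ with the intermediate point $b$ yields $d(ac,de)\le d(ac,b)+d(b,de)$, and one final application of symmetry to rewrite $d(b,de)$ as $d(de,b)$ finishes the estimate.

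I do not anticipate a genuine obstacle, as the proposition is essentially a restatement of the metric axioms pulled back along the map $(x,y,z)\mapsto(xz,y)$. The only places requiring attention are matching the correct arguments of $td$ to the slots of $d$ and noticing, in (\ref{comm}), that the trailing factor $1$ renders one of the products trivial so that the identity becomes pure symmetry.
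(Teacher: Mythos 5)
Your proof is correct and matches the paper's intent exactly: the paper simply states that the proposition ``follows straight from the definition of a triad deviation,'' and your unfolding of $td(x,y,z)=d(xz,y)$ into the three metric axioms (identity of indiscernibles, symmetry, triangle inequality with intermediate point $b$) is precisely the argument being elided. All three reductions, including the bookkeeping with the trailing factor $1$ in (\ref{comm}), are accurate.
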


As a consequence of (\ref{comm}), we have:

\begin{cor}
For all positive real numbers $a,b,c$
\begin{equation}
td(a,b,c)=td(c,b,a). \label{comm2}
\end{equation}
\end{cor}

\begin{prop}
For each function $td:\ {\mathbb R}_+^3  \rightarrow [0,+\infty)$ satisfying $(\ref{zero})-(\ref{tri})$ function $d_{td}:\ {\mathbb R}_+^2 \rightarrow [0,+\infty)$ given by 
\begin{equation}
d_{td}(x,y)=td(x,y,1) \label{ind_met}
\end{equation}
 is a metric. Moreover, the triad deviation $td$ is induced by the metric $d_{td}$.
\end{prop}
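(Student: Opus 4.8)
The plan is to verify directly that $d_{td}$ satisfies the four defining properties of a metric — non-negativity, identity of indiscernibles, symmetry, and the triangle inequality — each of which should fall out of exactly one of the hypotheses $(\ref{zero})$–$(\ref{tri})$ after specializing one or more of the arguments to $1$. Non-negativity is immediate, since $td$ takes values in $[0,+\infty)$ and $d_{td}(x,y)=td(x,y,1)$. For the identity of indiscernibles I would put $c=1$ in $(\ref{zero})$: then $d_{td}(x,y)=td(x,y,1)=0$ holds iff $x\cdot 1=y$, i.e. iff $x=y$. For symmetry I would invoke $(\ref{comm})$ with $a=x$, $b=y$, $c=1$, which gives $td(x,y,1)=td(y,x\cdot 1,1)=td(y,x,1)$, that is $d_{td}(x,y)=d_{td}(y,x)$.

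The triangle inequality is the one step that calls for a deliberate choice of substitution, and is where I expect the only real friction. I would apply $(\ref{tri})$ with $a=x$, $b=y$, $c=1$ and the factorization $d=z$, $e=1$ (any factorization $de=z$ serves equally well). The left-hand side becomes $td(x,z\cdot 1,1)=td(x,z,1)=d_{td}(x,z)$, while the right-hand side becomes $td(x,y,1)+td(z,y,1)$. It then remains only to rewrite the second summand: $(\ref{comm})$ with $a=z$, $b=y$, $c=1$ yields $td(z,y,1)=td(y,z,1)=d_{td}(y,z)$, so that $d_{td}(x,z)\le d_{td}(x,y)+d_{td}(y,z)$. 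Together with the first three properties this shows $d_{td}$ is a metric.

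Finally, for the ``moreover'' clause I must check that $td(x,y,z)=d_{td}(xz,y)$ for all positive $x,y,z$. Starting from $(\ref{comm})$ with $a=x$, $b=y$, $c=z$ gives $td(x,y,z)=td(y,xz,1)$, and a second application of $(\ref{comm})$ — with $a=xz$, $b=y$, $c=1$, equivalently the symmetry just established — gives $td(y,xz,1)=td(xz,y,1)=d_{td}(xz,y)$. Hence $td(x,y,z)=d_{td}(xz,y)$, so $td$ is induced by $d_{td}$, which completes the argument. The whole proof is thus a matter of choosing the right specializations; the only place where a nontrivial decision is made is the factorization $de=z$ feeding into $(\ref{tri})$.
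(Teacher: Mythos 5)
Your proposal is correct and follows essentially the same route as the paper: specialize $c=1$ in (\ref{zero}) for the identity of indiscernibles, use (\ref{comm}) for symmetry, apply (\ref{tri}) with the factorization $d=z$, $e=1$ for the triangle inequality, and derive the ``induced by'' claim from (\ref{comm}) plus symmetry. No gaps; the substitutions match the paper's proof step for step.
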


\begin{proof}
From $(\ref{zero})$, it follows that $d_{td}(x,y)=0\Leftrightarrow td(x,y,1)=0\Leftrightarrow x=y$.
Condition (\ref{comm}) implies that $d_{td}(y,x)=td(y,x,1)=td(x,y,1)=d_{td}(x,y)$.
Finally, from (\ref{tri}) we get $d_{td}(x,z)=td(x,z,1)\leq td(x,y,1)+td(z,y,1)=d_{td}(x,y)+d_{td}(z,y)=d_{td}(x,y)+d_{td}(y,z)$, which proves that $d_{td}$ is a metric. 

The former statement follows also from the sequence of equalities:
$$d_{td}(xz,y)=d_{td}(y,xz)=td(y,xz,1)=td(x,y,z).$$
\end{proof}

%
%
\begin{df}
We say that a metric $d_{td}$, satisfying (\ref{ind_met}), is {\em induced by a triad deviation $td$}.
\end{df}

\begin{df}
We say that a triad deviation $td$ is {\em bounded} if $\exists M>0 \; \forall a,b,c \in {\mathbb R}_+\ td(a,b,c) \leq M$.
\end{df}

\begin{df}
We say that a bounded triad deviation $ii$ is a {\em triad inconsistency indicator} if $\forall a,b,c \in {\mathbb R}_+\ ii(a,b,c) \leq 1$.
\end{df}

\begin{ex}
The function $DI:\ {\mathbb R}_+^3  \rightarrow \{0,1\}$ given by 
\[ DI(a,b,c)=
 \left\{
  \begin{array}{cl}
   0, & ac=b\\
   1, & \mbox{otherwise}
  \end{array}
 \right.
\]
 is a triad inconsistency indicator induced by a discrete metric.
\end{ex}

\begin{ex}
The function $EI:\ {\mathbb R}_+^3  \rightarrow [0,+\infty)$ given by $$EI(a,b,c)=|ac-b|$$ is an unbounded triad deviation induced by a Euclidean metric.
\end{ex}

\begin{ex}
The function $I_1:\ {\mathbb R}_+^3  \rightarrow [0,1)$ given by $$I_1(a,b,c)=\frac{|ac-b|}{1+|ac-b|}$$ is a triad inconsistency indicator induced by a metric $d_1$ given by the formula $d_1(x,y)=\frac{|x-y|}{1+|x-y|}$.
\end{ex}


The distance-based inconsistency indicator was proposed in \cite{Kocz93} and generalized in \cite{DK1994}:

$$Kii(a,b,c) = \min \left(|1-\frac{b}{ac}|,|1-\frac{ac}{b}|\right)$$ 

\noindent for all triads $(x,y,z)$ specified by the consistency condition.

\noindent It was simplified in \cite{KS2014a} to:

$$Kii(a,b,c) = 1-\min\left(\frac{b}{ac},\frac{ac}{b}\right),$$ 

\noindent which is equivalent to:

$$Kii(a,b,c)=1- e^{-\left|\ln\left (\frac{b}{ac}\right )\right |}.$$

\begin{lemma} \label{metric}
Function $d: \ {\mathbb R}_+^2 \rightarrow [0,1)$ given by formula
$$d(x,y)=1-\min\left(\frac{x}{y},\frac{y}{x}\right)$$
is a metric.
\end{lemma}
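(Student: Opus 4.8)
The plan is to verify the four metric axioms in turn, disposing of non-negativity, the identity of indiscernibles, and symmetry quickly, and then concentrating the real work on the triangle inequality. For the first three, I would note that for $x,y \in \mathbb{R}_+$ the two ratios satisfy $(x/y)(y/x)=1$, so exactly one of them is at most $1$; hence $0 < \min(x/y,y/x) \le 1$, which gives $d(x,y) \in [0,1)$, with $d(x,y)=0$ precisely when $\min(x/y,y/x)=1$, i.e. when $x=y$. Symmetry is immediate, since swapping $x$ and $y$ only swaps the two arguments of the minimum.

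For the triangle inequality I would first pass to the logarithmic form already recorded in the excerpt, namely $d(x,y)=1-e^{-|\ln(x/y)|}=f(\ln x-\ln y)$, where $f(t)=1-e^{-|t|}$. Since $\ln$ is a bijection of $\mathbb{R}_+$ onto $\mathbb{R}$, the inequality $d(x,z)\le d(x,y)+d(y,z)$ becomes, after setting $s=\ln x-\ln y$ and $t=\ln y-\ln z$, equivalent to the subadditivity statement $f(s+t)\le f(s)+f(t)$ for all real $s,t$. This reduction is the conceptual heart of the argument: it converts a multiplicative inequality in three free positive variables into a one-variable subadditivity property.

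To establish subadditivity I would use that $f$ is even, non-negative, vanishes at $0$, is non-decreasing on $[0,\infty)$, and is concave there (indeed $f''(t)=-e^{-t}<0$ for $t>0$). The standard consequence of concavity together with $f(0)=0$ is that the chords are superadditive: for $a,b\ge 0$ one has $f(a)\ge \frac{a}{a+b}f(a+b)$ and $f(b)\ge \frac{b}{a+b}f(a+b)$, which sum to $f(a)+f(b)\ge f(a+b)$. Combining this with evenness and monotonicity yields, for arbitrary real $s,t$,
$$f(s+t)=f(|s+t|)\le f(|s|+|t|)\le f(|s|)+f(|t|)=f(s)+f(t),$$
which is exactly what is needed.

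I expect the main obstacle to be the triangle inequality, and specifically the temptation to attack it directly in the ratio variables, where the case analysis arising from the two minima becomes unwieldy. The logarithmic substitution removes this difficulty, leaving only the elementary subadditivity lemma for concave functions vanishing at the origin, which I would either cite or prove with the two-line chord argument above.
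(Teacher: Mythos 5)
Your proof is correct, but it takes a genuinely different route from the paper. The paper proves the triangle inequality by brute force in the ratio variables: it rewrites $d(x,z)\leq d(x,y)+d(y,z)$ as
$\min\left(\frac{x}{y},\frac{y}{x}\right)+\min\left(\frac{y}{z},\frac{z}{y}\right) \leq 1+\min\left(\frac{x}{z},\frac{z}{x}\right)$,
uses the symmetry of the inequality in $x$ and $z$ to reduce to three orderings ($x\leq y\leq z$, $x\leq z\leq y$, $y\leq x\leq z$), and in each case clears denominators to arrive at an evident product inequality such as $0\leq (y-x)(z-y)$. You instead pass to logarithmic coordinates, writing $d(x,y)=f(\ln x-\ln y)$ with $f(t)=1-e^{-|t|}$, and reduce the whole matter to subadditivity of $f$, which you derive from the chord argument for concave functions vanishing at the origin together with evenness and monotonicity. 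Both arguments are complete; the substitution $\min(x/y,y/x)=e^{-|\ln(x/y)|}$ and the reduction of the three-variable inequality to $f(s+t)\leq f(s)+f(t)$ are exactly right, and the superadditivity-of-chords step is the standard proof. What the paper's approach buys is elementary self-containedness (pure algebra, no calculus); what yours buys is conceptual economy and generality, since the same argument shows that $g(\ln x - \ln y)$ is a metric on ${\mathbb R}_+$ for any even $g$ with $g(0)=0$ that is nondecreasing and concave on $[0,\infty)$ -- the present $d$ and the indicator $I_1$ of the paper's Example 2.12 are both instances. One cosmetic nitpick: your phrase ``exactly one of them is at most $1$'' fails when $x=y$ (both equal $1$), though the conclusion $0<\min(x/y,y/x)\leq 1$ that you actually use is correct.
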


\begin{proof}
Evidently, $$d(x,y)=0 \Leftrightarrow \min\left(\frac{x}{y},\frac{y}{x}\right)=1 \Leftrightarrow x=y,$$
and $$d(x,y)=1-\min\left(\frac{x}{y},\frac{y}{x}\right)=1-\min\left(\frac{y}{x},\frac{x}{y}\right)=d(y,x).$$
For the proof of the triangle inequality
$$d(x,z)\leq d(x,y)+d(y,z)$$
notice, that it is equivalent to the inequality
\begin{equation}\label{triangle}
\min\left(\frac{x}{y},\frac{y}{x}\right)+\min\left(\frac{y}{z},\frac{z}{y}\right) \leq 1+\min\left(\frac{x}{z},\frac{z}{x}\right).
\end{equation}
Consider three cases:

$1^{o}\ x \leq y \leq z$

then 
$$(\ref{triangle}) \Leftrightarrow \frac{x}{y}+\frac{y}{z} \leq 1+ \frac{x}{z} \Leftrightarrow xz+y^2 \leq yz + xy \Leftrightarrow 0 \leq (y-x)(z-y),$$
which is evident to:

$2^{o}\ x \leq z \leq y$

then 
$$(\ref{triangle}) \Leftrightarrow \frac{x}{y}+\frac{z}{y} \leq 1+ \frac{x}{z} \Leftrightarrow xz+z^2 \leq yz + xy \Leftrightarrow 0 \leq (y-z)(x+z)$$

$3^{o}\ y \leq x \leq z$

then 
$$(\ref{triangle}) \Leftrightarrow \frac{y}{x}+\frac{y}{z} \leq 1+ \frac{x}{z} \Leftrightarrow yz+xy \leq xz + x^2 \Leftrightarrow 0 \leq (x-y)(x+z)$$

In all three cases, $(\ref{triangle})$ follows immediately from $(\ref{comm2})$.
\end{proof}

\begin{thm}
$Kii$ is a triad inconsistency indicator.
\end{thm}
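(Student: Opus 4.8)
The plan is to verify directly that $Kii$ meets the definition of a triad inconsistency indicator, that is, to show it is a bounded triad deviation all of whose values are at most $1$. Two things must be established: that $Kii$ is a triad deviation induced by some metric, and that it is bounded by $1$.

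First I would identify the appropriate metric, namely the function $d$ of Lemma~\ref{metric}. Using the symmetry of $\min$, a one-line computation gives
$$d(ac,b)=1-\min\left(\frac{ac}{b},\frac{b}{ac}\right)=1-\min\left(\frac{b}{ac},\frac{ac}{b}\right)=Kii(a,b,c).$$
This is exactly the form $td(x,y,z)=d(xz,y)$ demanded by (\ref{td}), with $x=a$, $y=b$, $z=c$. Since Lemma~\ref{metric} already asserts that $d$ is a metric, it follows immediately that $Kii$ is a triad deviation induced by $d$.

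It then remains to bound $Kii$. For every positive real $t$ one has $\min(t,1/t)\in(0,1]$, since the smaller of a positive number and its reciprocal is positive and cannot exceed $1$. Taking $t=ac/b$ yields $\min(b/ac,ac/b)\in(0,1]$, and hence $Kii(a,b,c)=1-\min(b/ac,ac/b)\in[0,1)$. Therefore $Kii$ is bounded (one may take $M=1$) and satisfies $Kii(a,b,c)\le 1$ for every triad, which is precisely what the definition of a triad inconsistency indicator requires.

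I expect no genuine difficulty here: the substantive work, namely the triangle inequality for $d$, has already been discharged in Lemma~\ref{metric}, so the remaining argument is essentially an unwinding of the definitions together with the elementary observation that $\min(t,1/t)\in(0,1]$.
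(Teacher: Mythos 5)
Your proposal is correct and follows essentially the same route as the paper: the paper's proof simply observes that $Kii(a,b,c)=d(ac,b)$ with $d$ the bounded metric of Lemma~\ref{metric}. You merely make explicit the boundedness step ($\min(t,1/t)\in(0,1]$) that the paper leaves implicit in the lemma's codomain $[0,1)$.
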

\begin{proof}
$Kii$ is given by the formula: $Kii(a,b,c)=d(ac,b),$ where $d$ is a bounded metric from Lemma \ref{metric}.
\end{proof}

\begin{ex}
In \cite{PL2003}, a seemingly similar inconsistency index for triads was introduced:
$$PL(a,b,c)=\frac{b}{ac}+\frac{ac}{b}-2.$$
As it was shown in \cite{KuS}, the inconsistency indicator of a matrix based on $PL$ is not equivalent to $Kii$.

It is easy to see that $PL$ is not bounded, since $PL(1,n,1)=n+\frac{1}{n}-2$, which may be arbitrarily large.
Moreover, $PL$ is not even a triad deviation!  To see it, let us use $$a=1,\; b=3,\; c=5,\; d=1,\; e=2$$ and calculate 
$$PL(1,3,5)+PL(1,3,2)=\frac{13}{30}$$ and $$PL(1,2,5)=\frac{9}{10},$$ which contradicts (\ref{tri}).
\end{ex}

Finally, it should be acknowledged that \cite{KB1939} defined an $n \times n$ matrix $A$ with positive elements as {\em a pairwise comparisons matrix (a PC-matrix)}. Such a matrix $A$ is called {\em 
reciprocal} if $a_{ij} = \frac{1}{a_{ji}}$ for every $i,j=1, \ldots ,n$ (then evidently $a_{ii}=1$ for every $i=1, \ldots ,n $). 
A PC-matrix $A$ is called {\em consistent} (or {\em transitive}) if: 
$$a_{ij} \cdot a_{jk}=a_{ik}$$ for every $i,j,k=1,2, \ldots ,n$. For different $i,j,k$ we define {\em a triad} as \linebreak $(a_{ij},a_{ik},a_{jk})$. The triad inconsistency, introduced in \cite{KB1939}, was analyzed only in the cardinal way (by counting intransitive triads).

\cite{KS2015a} shows that it is possible to avoid inconsistency by minimizing the input but such reduction is not always advisable. For example, a multiple choice exam with 100 questions could be reduced to one question with the full mark if answered and zero otherwise, making it more of a lottery that the measure of knowledge. It is pointed out in \cite{KS2015a} that the inconsistency is the result of the excessive input but minimizing the input data to its extreme is not always advisable as we demonstrated by the above example.

\section{Summary}

This study provides a theory, with roofs, for a triad inconsistency axiomatization in pairwise comparisons informally proposed in \cite{KS2014a}. Key properties of the distance-based indicator have been illustrated by several examples. It is of considerable importance for the future research and extensions.

\end{document}